\newtheorem{definition}{Definition}
\newcommand{\MC}{\mathit{MC}}
\newcommand{\C}{\mathcal{C}}
\newcommand{\G}{\mathscr{G}}
\newcommand{\N}{\mathscr{N}}
\newcommand{\eqComment}[1]{\textnormal{\fontsize{8}{8}\selectfont{\ \ \ (#1)}}}
\newtheorem{theorem}{Theorem}[section]
\newtheorem{claim}[theorem]{Claim}
\newtheorem{lemma}[theorem]{Lemma}
\newtheorem{example}[theorem]{Example}
\newtheorem{proof}[theorem]{Proof}
\begin{document}

{\begin{center}
\fontsize{15}{15}{\selectfont{\textbf{A Measure of Synergy in Coalitions}}}
~\\~\\~\\
\fontsize{11}{11}{\selectfont{
Talal Rahwan$^1$, Tomasz Michalak$^2$ and Michael Wooldridge
~\\~\\
\fontsize{10}{10}{\selectfont{
$^1$Masdar Institute of Science and Technology, UAE
~\\
$^2$Department of Computer Science, University of Oxford, UK
}}
~\\~\\
(11 April, 2014)
~\\~\\~\\~\\
\textbf{Abstract}
~\\~\\
}}
\fontsize{11}{11}{\selectfont{
\parbox{12cm}{
  When the performance of a team of agents exceeds our expectations or
  fall short of them, we often explain this by saying that there was
  some \emph{synergy} in the team---either positive (the team exceeded
  our expectations) or negative (they fell short). Our aim in this
  article is to develop a formal and principled way of measuring
  synergies, both positive and negative. Using characteristic function
  cooperative games as our underlying model, we present a formal
  measure of synergy, based on the idea that \emph{a synergy is
    exhibited when the performance of a team deviates from the norm}.
  We then show that our synergy value is the only possible such
  measure that satisfies certain intuitive properties. We then
  investigate some alternative characterisations of this measure.
  }}}
\end{center}}
%

\section{Introduction}\label{sec:intro}

According to the Oxford English dictionary, synergy is \textit{``the
  interaction or cooperation of two or more organizations, substances,
  or other agents to produce a combined effect greater than the sum of
  their separate effects''}. Here, we interpret the word ``greater''
to mean either greater in harm, or in benefit. In other words, a
synergy can be either negative, positive, or zero. We address the
following fundamental question: \textit{what is an appropriate measure
  of synergy?} For example, suppose we have a set of school
children, who are given various group assignments in different subjects
throughout the year. At the end of the year, every possible group (or
``coalition'' as we call it) has worked on some activity, for which
they have received a grade. Now, suppose the teacher is asked to
characterise the synergies of various different groups---the extent to
which the groups work particularly well or particularly badly. What is
a reasoned way to do this? It is this question that we address in the
present paper.

To illustrate some of the subtleties involved in this issue, suppose
the student groups $C = \{1, 2, 3\}$ and $C' = \{4, 5, 6\}$ each
receive a grade of, say, 70 each. Naively it might appear that they
exhibit equal synergies. But suppose:

\begin{itemize}
\item Each of the three student in $C$ is shy. This may inhibit
  him/her from making any substantive contribution in a group
  activity. However, suppose those three students are good friends. As
  such, when they were all together in a coalition with no one else,
  they could overcome their shyness. Consequently, they achieved a
  score of 70---a score that indicates an improvement in the overall
  performance (considering the small contribution that each of them
  usually makes). Based on this, the teacher may conclude that there
  is a \textit{positive synergy} among those students; putting them
  together in a group helps bring the best out of them.
\item Each student in $C'$ has good leadership skills. As such, they
  each usually perform very well in group activities. However, when
  they are in the same coalition with no one else, they argue
  constantly. As a result, they underperform. For those students, the
  teacher may consider a score of 70 to be relatively low; the teacher
  expects more of them. This suggests that those students must not be
  placed together in a group with no one else, otherwise there will be
  a \textit{negative synergy} among them.
\end{itemize}
So, although $C$ and $C'$ have the same grade, the synergy seems to be
positive in the first coalition, and negative in the second. This
suggests that the grade alone might not reveal the true nature of the
synergy among the members of a coalition.

An alternative way to quantify synergy is to think of it as a
\textit{surplus}---the \textit{extra} score obtained through the
formation of the group. This surplus can be quantified, for example,
by comparing the score of the group with the scores obtained when each
member works alone.\footnote{\footnotesize Other ways to quantify the
  surplus are discussed in Section~\ref{sec:relatedWork}.} Here, the
intuition is that the performance of any given member is expected to
be equal to his or her performance when working alone; any deviation
from this expectation is then attributed to the synergy among the
group members. Based on this, the suitability of such a measure
depends on the answer to the following fundamental question:

\vspace*{0.5ex}\hspace*{0.35cm}\parbox{12.5cm}{\noindent\textit{If the
    performance of an individual working alone is different than his
    or her performance when working in a certain group, should this
    difference be entirely attributed to that group, or is the
    individual also (at least partially) responsible?
  }}\vspace*{0.5ex}

\noindent Note that with the aforementioned interpretation of synergy,
the entire difference is indeed attributed to the group. We argue that
this may not necessarily be the most appropriate answer in all
cases. In our school scenario, for example, there could be a student
who is intelligent but socially anxious. This student excels when
working alone, but underperforms in group activities, regardless of
the identities of his or her team mates. This suggests that any such
deterioration in the student's performance is not necessarily the
group's fault.

The above discussion suggests the need for a more sophisticated
measure that inherently avoids the aforementioned limitations.  We
propose a measure of synergy that, instead of considering it to be a
\textit{surplus}, considers to be a \textit{deviation from the
  norm}. In other words, instead of being something \textit{extra}, we
consider it to be something \textit{special} or \textit{abnormal}. A
\textit{measure of synergy} then reflects the degree to which a
certain coalition differs from the norm. One can then define the norm
such that the resulting measure exhibits certain desirable
properties. Such properties could be aimed, for example, at capturing
the way in which each individual normally behaves in different groups,
e.g., to identify any students who are social or antisocial by
nature. With this in mind, we propose one such measure, and show that
it is the only possible measure that satisfies certain intuitive
properties.

\section{Preliminaries}\label{sec:preliminaries}

Throughout the paper, we will use standard notions from cooperative
game theory. In particular, a \textit{characteristic-function game} is
a pair, $(N,v)$, where $N=\{1,\dots,n\}$ is the set of
\textit{players}, and $v:2^N \to {\mathbb R}$ is the
\textit{characteristic function}, which assigns to each subset (or
\textit{``coalition''}) $C\subseteq N$ a real value, $v(C)$---called
\textit{the value of $C$}---which can be interpreted as the utility
attainable by that coalition. For instance, in a game that represents
our school scenario, every player represents a student, while the
characteristic function returns the grade attained by a group.  As
common in the literature, we will assume that $v(\emptyset) = 0$.  We
will denote the set of all such games by $\G$, and will denote the
universe of players by $\N$.

The following standard definitions will be used heavily throughout the
paper:

\begin{definition}[\textbf{Symmetric Players}]\label{def:symmetricPlayers}
Let $(N,v)$ be a game, and let $i,j\in N$. Players $i$ and $j$ are said to be \emph{symmetric} if and only if: $v(C\cup\{i\}) = v(C\cup\{j\})$ for every $C\subseteq N\setminus\{i,j\}$.
\end{definition}

From a strategic perspective, any two symmetric players are identical in the game; they only differ in their names.

\begin{definition}[\textbf{Null Player}]\label{def:nullPlayer}
Let $(N,v)$ be a game. A player $i\in N$ is called a \emph{null player} if and only if $v(C\cup\{i\})=v(C)$ for all $C\subseteq N$.\footnote{\footnotesize If a coalition $C$ contains player $i$ then the equality holds trivially, because then $C\cup\{i\} = C$.}
\end{definition}

In words, whenever a null player joins a coalition, it makes no impact on that coalition's value.

\begin{definition}[\textbf{Dummy Player}]\label{def:dummyPlayer}
Let $(N,v)$ be a game. A player $i\in N$ is called a \emph{dummy player} if and only if $v(C\cup\{i\}) = v(C)+v(\{i\})$ for all $C\subseteq N\setminus\{i\}$.
\end{definition}

Intuitively, a dummy player is one whose performance is never affected
by his or her team mates; the player always produces exactly the same
amount of utility. We will say a game $(N,v)$ is a \emph{game of dummies} if every player in it is a dummy player.

\begin{definition}[\textbf{Sum of Two Games}]\label{def:sumOfTwoGames}
  The \emph{sum of two games}, $(N,v)$ and $(N,w)$, is a game denoted
  by $(N,v+w)$, where $(v+w)(C) = v(C)+w(C)$ for every coalition
  $C\subseteq N$.
\end{definition}

Similarly, subtracting $(N,v)$ from $(N,w)$ results in a game denoted
by $(N,v-w)$, where $(v-w)(C) = v(C)-w(C)$ for every coalition
$C\subseteq N$. 

\begin{definition}[\textbf{Shapley Value}]\label{def:shapleyValue}
Let $(N,v)$ be a game, and let $i\in N$. The \emph{Shapley value}~\cite{Shapley:53} of $i$ is denoted by $\phi_i(N,v)$ and given by:
\begin{equation}\label{eqn:shapleyValue}
\phi_i(N,v) = \sum_{C \subseteq N\setminus \{i\}} \frac{|C|!\ (n-|C|-1)!} {n!} \big( v(C\cup\{i\}) - v(C) \big)
\end{equation}
\end{definition}


Assuming that the grand coalition (i.e., $N$) will be formed, and that
we want to divide $v(N)$ among the players, Shapley showed that
$(\phi_1(N,v), \dots, \phi_n(N,v))$ is the only possible division of
$v(N)$ that satisfies all of the following properties:

\begin{itemize}
    \item[] $P_1$. (\textbf{Symmetry)}: For every game $(N,v)\in\G$ and every pair of symmetric players $i$ and $j$ in the game: $\phi_i(N,v)=\phi_j(N,v)$;\smallskip
    \item[] $P_2$. (\textbf{Null Player)}: For every game $(N,v)\in\G$, and every null player $i$ in the game: $\phi_i(N,v) = 0$;\smallskip
    \item[] $P_3$. (\textbf{Efficiency)}: For every game $(N,v)\in\G$, we have: $\sum_{i\in N}\phi_i(N,v) = v(N)$;\smallskip
    \item[] $P_4$. (\textbf{Additivity)}: For every pair of games, $(N,v)$ and $(N,w)$, and every player $i\in N$, we have: $\phi_i(N,v+w) = \phi_i(N,v) + \phi_i(N,w)$.\smallskip
\end{itemize}
Properties $P_1$---$P_4$ are commonly known as the Shapley axioms.


\section{A New Measure of Synergy}\label{sec:FirstDefinition}

We start with a general definition of what a synergy measure actually is.

\begin{definition}[\textbf{Measure of Synergy}]\label{def:measureOfSynergy}
A measure of synergy $\psi$ is a real valued function on:
$$
\bigcup_{(N,v)\in \G} \big\{(N,v,C) : C\subseteq N\big\},
$$
where $\psi((N,v,C))$ represents the amount of synergy among the members in $C$ in the game $(N,v)$. The set of all such measures will be denoted by $\Psi$.
\end{definition}

For notational convenience, we will write $\psi^N_v(C)$ instead of $\psi((N,v,C))$. Furthermore, since it is typically assumed that the empty set has no value (i.e., $v(\emptyset)=0$), it makes sense to also assume it has no synergy. As such, we will assume that $\psi^N_v(\emptyset) = 0$ for all $(N,v)\in\G$.

To try and identify a reasonable measure in $\Psi$, we considered the following properties, which are inspired by the Shapley axioms:

\begin{itemize}
    \item[] $P_5$. (\textbf{Symmetric-Synergy}): A measure $\psi\in\Psi$ satisfies $P_5$ if, for every game $(N,v)\in\G$, and every pair of symmetric players $i$ and $j$ in that game, we have: $\psi^N_v(C\cup\{i\})=\psi^N_v(C\cup\{j\})$ for all $C\subseteq N\setminus\{i,j\}$;\smallskip
    \item[] $P_6$. (\textbf{Null-Synergy}): A measure $\psi\in\Psi$ satisfies $P_6$ if $\psi^N_v(C) = \psi^N_v(C\cup\{i\})$ for every game $(N,v)$ and every null player $i$ in that game;\smallskip
    \item[] $P_7$. (\textbf{Dummy-Synergy}): A measure $\psi\in\Psi$ satisfies $P_7$ if, for every game $(N,v)\in\G$, the following is a game of dummies: $(N,v-\psi^N_v)$;
    \item[] $P_8$. (\textbf{Normalized-Synergy}): A measure $\psi\in\Psi$ satisfies $P_8$ if, for every game $(N,v)\in\G$, we have: $\sum_{C\subseteq N}\psi^N_v(C) = 0$;\smallskip
    \item[] $P_9$. (\textbf{Additive-Synergy}): A measure $\psi\in\Psi$ satisfies $P_9$ if, for every pair of games, $(N,v)$ and $(N,w)$, and every coalition $C\subseteq N$, we have: $\psi^N_v(C) + \psi^N_w(C) = \psi^N_{v+w}(C)$.\smallskip
\end{itemize}
Observe that, while $P_5$, $P_6$, and $P_8$ are fairly natural
counterparts of Shapley axioms, properties $P_7$ and $P_8$ are
new. Let us discuss and motivate these properties in more detail.

Property~$P_5$ states that \textit{replacing a player with another
  symmetric one does not alter the synergy}. This seems natural,
because any two symmetric players are essentially identical except for
having different names. Thus, it seems strange to claim that they
differ in the way they affect the synergies.

Property~$P_6$ states that \textit{a null player does not affect the
  synergy in a coalition}. The intuition behind this may become
clearer through the following example:

\begin{example} Let $(N,u)$ be a game where $N=\{1,2,3\}$ and $u$ is
  defined as follows:
\begin{center}
\begin{tabular}{lll}
  $u(\emptyset) = 0$ & $u(\{3\}) = 0$ \\
  $u(\{1\}) = 10$  & $u(\{1,3\}) = 10$ \\
  $u(\{2\}) = 20$  & $u(\{2,3\}) = 20$ \\
  $u(\{1,2\}) = 1000000$  & $u(\{1,2,3\}) = 1000000$ \\
\end{tabular}
\end{center}
\end{example}

Observe that, in this example, player 3 is a null player. Also see how
the utility attainable by 1 and 2 increases significantly when they
form a coalition (compared to the case where they each form a
singleton coalition). Looking at this game, it seems reasonable to
claim that whatever positive synergy there is between 1 and 2---which
made them achieve the million units of utility---is retained when
player 3 joins their coalition. In other words, the null player does
not affect the synergy when joining the coalition, just as stated by
Property~$P_6$.

Property $P_7$ essentially states that \textit{without synergies, we
  are left with a game in which every player is a dummy}, i.e., we are
left with an ``additive'' game---one in which the value of every
coalition $C\subseteq N$ is simply the sum of the values that the
members achieve when each of them $i\in C$ forms its own coalition
$\{i\}$. To make the underlying concept clearer, let us put it in the
context of our school scenario. In an ``additive'' classroom, whenever
a group is formed, each member completely ignores his or her team
mates, and acts exactly just like he or she would act when working
alone. It seems reasonable to claim that this is how a classroom would
look like if we strip away all synergies from it, so to speak.

Property~$P_8$ relates to our interpretation of synergy as a
\textit{deviation from the norm}. This interpretation implies that a
coalition whose value is greater than the norm has positive synergy,
while a coalition whose value is lower than the norm has negative
synergy. More importantly, a coalition whose value follows the norm
has exactly zero synergy. Perhaps the most intuitive interpretation of
the ``\textit{norm}'' is to consider it to be the ``\textit{average}''
case. As such, it seems reasonable that the coalitions on average
follow the norm. That is to say, it seems reasonable that synergy on
average equals zero, just as stated in Property~$P_8$.

Finally, let us comment on $P_9$. While this property is admittedly
not very intuitive, at least it has an intuitive implication: Scaling
a game---by multiplying all coalition values by some constant---does
not change the relative differences between the synergies in that
game.

Taking the above properties into consideration, we propose the
following measure of synergy, called the \textit{synergy value}:

\begin{definition}[\textbf{Synergy Value}]\label{def:synergyValue}
  The \emph{synergy value} is the measure of synergy, $\chi\in\Psi$,
  defined for every game $(N,v)\in\G$ and every coalition $C\subseteq
  N$ as follows:
\begin{equation}\label{eqn:synergyValue}
\chi^N_v(C) = v(C) - \sum_{i\in C}\overline{\phi}_i(N,v),
\end{equation}
where $\overline{\phi}_i(N,v)$ is the average Shapley value of player
$i$ taken over all sub-games $(C,v):C\subseteq N$. That is,
\begin{equation}\label{eqn:averageShapleyValue}
\overline{\phi}_i(N,v) = 2^{1-n} \sum_{C\subseteq N : i\in C} \phi_i(C,v).
\end{equation}
\end{definition}

We conclude this section with the following theorem, which essentially states that if we view properties $P_5$ to $P_9$ as axioms, then those axioms characterize the Synergy value $\chi$.

\begin{theorem}\label{thm:firstAxiomatizationOfSynergyValue}
  The synergy value $\chi$ is the only measure in $\Psi$ that
  satisfies properties $P_5$ to $P_9$.
\end{theorem}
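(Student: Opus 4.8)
The plan is to prove the two halves of the characterisation separately: first that the synergy value $\chi$ satisfies $P_5$--$P_9$ (existence), then that it is the unique measure in $\Psi$ doing so (uniqueness). For existence, the key observation is that symmetry, the null-player condition, additivity, and efficiency of the Shapley value all descend to sub-games: if $i,j$ are symmetric (resp. $i$ is null) in $(N,v)$, the same holds in every $(S,v)$ with $\{i,j\}\subseteq S$ (resp. $i\in S$), since the defining equalities are required for all $C\subseteq N$ and hence for all $C\subseteq S$. Thus the averaged quantity $\overline{\phi}_i$ inherits the corresponding properties: $\overline{\phi}_i=\overline{\phi}_j$ for symmetric $i,j$ (pairing the subsets containing exactly one of $i,j$ via the swap $i\leftrightarrow j$, and invoking $P_1$ on those containing both), $\overline{\phi}_i=0$ for null $i$ (by $P_2$ in each sub-game), and $\overline{\phi}_i(N,v+w)=\overline{\phi}_i(N,v)+\overline{\phi}_i(N,w)$ (by $P_4$ in each sub-game). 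Substituting these into $\chi^N_v(C)=v(C)-\sum_{i\in C}\overline{\phi}_i(N,v)$ yields $P_5$, $P_6$, and $P_9$ at once. For $P_7$ I would compute $(v-\chi^N_v)(C)=\sum_{i\in C}\overline{\phi}_i(N,v)$, which is manifestly additive and hence a game of dummies; for $P_8$, summing $\chi^N_v$ over all $C$ and swapping the order of summation reduces the claim to $\sum_{i\in N}\overline{\phi}_i(N,v)=2^{1-n}\sum_{S\subseteq N}v(S)$, which follows from Shapley efficiency $P_3$ applied in each sub-game $(S,v)$.

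Uniqueness is the substantive part. Suppose $\psi\in\Psi$ satisfies $P_5$--$P_9$ and fix $N$. The pivotal step is that $P_7$ alone forces a rigid form on $\psi$: since $(N,v-\psi^N_v)$ is a game of dummies, an easy induction on coalition size gives $(v-\psi^N_v)(C)=\sum_{i\in C}(v-\psi^N_v)(\{i\})$, so that writing $a^v_i:=v(\{i\})-\psi^N_v(\{i\})$ we obtain $\psi^N_v(C)=v(C)-\sum_{i\in C}a^v_i$. Hence $\psi$ is completely determined by the per-player scalars $a^v_i$, and it remains only to show $a^v_i=\overline{\phi}_i(N,v)$. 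Feeding this form back into the remaining axioms translates them into conditions on the $a^v_i$: $P_5$ gives $a^v_i=a^v_j$ for symmetric $i,j$; $P_6$ gives $a^v_i=0$ for null $i$; $P_9$ gives $a^{v+w}_i=a^v_i+a^w_i$; and $P_8$ gives the single scalar constraint $\sum_{i\in N}a^v_i=2^{1-n}\sum_{C\subseteq N}v(C)$.

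Finally I would pin the $a^v_i$ down through the unanimity basis, exactly as in the Shapley uniqueness proof but with $P_8$ in the role of efficiency. Writing $v=\sum_{\emptyset\neq T\subseteq N}\lambda_T u_T$ with $u_T$ the unanimity games, I would first evaluate $a^{w}_i$ on a single scaled unanimity game $w=c\,u_T$: here every $i\notin T$ is null and all members of $T$ are mutually symmetric, so the null and symmetry conditions collapse the $P_8$ constraint to $|T|\cdot a^w_i=2^{1-n}\cdot c\cdot 2^{n-|T|}$ for $i\in T$, giving $a^w_i=c\,2^{1-|T|}/|T|$. A direct computation of $\overline{\phi}_i(N,c\,u_T)$ (using that in any sub-game containing $T$ the Shapley value of $c\,u_T$ splits $c$ equally among $T$, and is $0$ otherwise) returns the same number, so $a^w_i=\overline{\phi}_i(N,w)$ on scaled unanimity games. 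Extending by additivity ($P_9$, which passes to finite sums) over the decomposition $v=\sum_T\lambda_T u_T$ then gives $a^v_i=\overline{\phi}_i(N,v)$ for every $v$, whence $\psi^N_v(C)=\chi^N_v(C)$. The main obstacle I anticipate is the reduction step itself: recognising that $P_7$ collapses the entire problem to determining the scalars $a^v_i$, and then verifying that it is precisely the averaged normalisation $P_8$---rather than ordinary Shapley efficiency---that produces the \emph{average} Shapley value $\overline{\phi}_i$; making the combinatorial constant $2^{1-|T|}/|T|$ agree on both sides is the crux of the calculation.
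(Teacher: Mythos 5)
Your proof is correct and takes essentially the same route as the paper's: direct verification of $P_5$--$P_9$ via the Shapley axioms in sub-games, then uniqueness by using $P_7$ to collapse $\psi$ to per-player scalars, pinning those down on scaled unanimity (carrier) games through $P_5$, $P_6$ and $P_8$ (your constant $c\,2^{1-|T|}/|T|$ matches the paper's $\beta = \alpha 2^{1-|S|}/|S|$ in Lemma~\ref{lem:proof:uniquenessOfSyergyValue}), and extending by additivity ($P_9$) over the unanimity decomposition. The only cosmetic difference is that you apply the dummy-game reduction to arbitrary $v$ up front and verify $a^w_i=\overline{\phi}_i(N,w)$ on unanimity games by direct computation, whereas the paper packages the reduction as a lemma about carrier games and deduces $x=\chi$ there from the fact that $\chi$ itself satisfies the same axioms.
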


\begin{proof}
  We begin by showing that the synergy value satisfies the five
  properties listed in the statement of the theorem.
\begin{claim}\label{clm:proof:firstAxiomatizationOfSynergyValue}
The Synergy value $\chi$ satisfies properties $P_5$ to $P_9$.
\end{claim}
\begin{proof}
Let $(N,v)$ be a game. To prove that $\chi$ satisfies $P_5$, for every pair of symmetric players $i$ and $j$ in the game, and every coalition $C\subseteq N\setminus\{i,j\}$, we need to prove that: $\chi^N_v(C\cup\{i\}) = \chi^N_v(C\cup\{j\})$, i.e.,
\begin{equation}\label{proof:P5:1}
v(C\cup\{i\}) - \sum_{k\in C\cup\{i\}}\overline{\phi}_k(N,v)\ \ =\ \ v(C\cup\{j\}) - \sum_{k\in C\cup\{j\}}\overline{\phi}_k(N,v).
\end{equation}
Since $i$ and $j$ are symmetric, then: $v(C\cup\{i\}) = v(C\cup\{j\}), \forall C\subseteq N\setminus\{i,j\}$. Consequently, to prove that Equation~\eqref{proof:P5:1} holds, it suffices to show that:
\begin{equation}\label{proof:P5:2}
\overline{\phi}_i(N,v) = \overline{\phi}_j(N,v).
\end{equation}
To this end, observe that:
\begin{itemize}
\item For every $C\subseteq N: i,j\in C$, the players $i$ and $j$ are symmetric in the game $(C,v)$, and so $\phi_i(C,v)=\phi_j(C,v)$.
\item For every $C\subseteq N\setminus\{i,j\}$, we know from \eqref{eqn:shapleyValue} that:
\begin{equation}\label{proof:P5:3}
\phi_i(C\cup\{i\},v) = \sum_{S \subseteq C} \frac {|S|!\ ((|C|+1)-|S|-1)!} {(|C|+1)!} \big( v(S\cup\{i\}) - v(S) \big),
\end{equation}
\begin{equation}\label{proof:P5:4}
\phi_i(C\cup\{j\},v) = \sum_{S \subseteq C} \frac {|S|!\ ((|C|+1)-|S|-1)!} {(|C|+1)!} \big( v(S\cup\{j\}) - v(S) \big).
\end{equation}
Now since $i$ and $j$ are symmetric, and since every $S\subseteq C$ does not contain $i$ nor $j$, then by definition of symmetry: $v(S\cup\{i\}) = v(S\cup\{j\})$. Thus, equations \eqref{proof:P5:3} and \eqref{proof:P5:4} imply that $\phi_i(C\cup\{i\},v)=\phi_j(C\cup\{j\},v)$.
\end{itemize}

\noindent The above two cases imply that Equation~\eqref{proof:P5:2} always holds, meaning that $\chi$ satisfies $P_5$.

Moving on to $P_6$, we need to prove that $\chi^N_v(C\cup\{i\}) = \chi^N_v(C)$ for every null player $i$ in game $(N,v)$. Since $i$ is a null player in $(N,v)$, then it is also a null player in every sub-game $(C,v):C\subseteq N$, and so $\overline{\phi}_i(N,v)=0$. We also know from the definition of a null player that $v(C\cup\{i\}) = v(C)$ for every $C\subseteq N$. As such,
$$
\chi^N_v(C\cup\{i\}) = v(C\cup\{i\}) - \sum_{j\in C\cup\{i\}}\overline{\phi}_j(N,v) = v(C) - \sum_{j\in C}\overline{\phi}_j(N,v) = \chi^N_v(C).
$$
Next, we prove that $\chi$ satisfies $P_7$. Let $\vartheta = v-\chi^N_v$. That is, $\vartheta = \sum_{i\in C}\overline{\phi}_i(N,v)$. We need to prove that every player $i$ in game $(N,\vartheta)$ is a dummy. To this end, it suffices to note that for every $i\in N$ and $C\subseteq N\setminus\{i\}$:
$$
\vartheta(C\cup\{i\}) = \sum_{j\in C\cup\{i\}}\overline{\phi}_j(N,v) = \overline{\phi}_i(N,v) + \sum_{j\in C}\overline{\phi}_j(N,v) = \vartheta(\{i\})+\vartheta(C).
$$
Moving on to $P_8$, we need to show that the following holds:
\begin{equation}
\sum_{C\subseteq N} \chi^N_v(C) = 0.
\end{equation}
This can be shown as follows:
\begin{eqnarray}
\sum_{C\subseteq N} \chi^N_v(C)  &=& \sum_{C\subseteq N} \Bigg( v(C) - \sum_{i\in C} \overline{\phi}_i(N,v) \Bigg) \nonumber \\
                                 &=& \sum_{C\subseteq N} v(C)\ - 2^{1-n}\ \sum_{C\subseteq N}\ \sum_{i\in C}\ \sum_{S\subseteq N : i\in S} \phi_i(S,v) \nonumber \\
                                 &=& \sum_{C\subseteq N} v(C)\ - \sum_{i\in N}\ \sum_{S\subseteq N : i\in S} \phi_i(S,v) \nonumber \\
                                 &=& \sum_{C\subseteq N} v(C)\ - \sum_{S\subseteq N} \sum_{i\in S}\phi_i(S,v) \nonumber \\
                                 &=& \sum_{C\subseteq N} v(C)\ - \sum_{S\subseteq N} v(S)\ =\ 0. \nonumber
\end{eqnarray}
Finally, we deal with $P_9$. We need to prove that for every pair of games, $(N,v)$ and $(N,w)$, and every $C\subseteq N$, we have: $\chi^N_v(C) + \chi^N_w(C) = \chi^N_{v+w}(C)$. Since $\phi$ satisfies $P_4$, it follows that $\overline{\phi}_j(N,v+w) = \overline{\phi}_j(N,v) + \overline{\phi}_j(N,w)$. Thus,
\begin{eqnarray}
\chi^N_{v+w}(C\cup\{i\}) &=& (v+w)(C\cup\{i\})\ - \sum_{j\in C\cup\{i\}}\overline{\phi}_j(N,{v+w}) \nonumber \\
                         &=& v(C)\ - \sum_{j\in C}\overline{\phi}_j(N,v)\ +\ w(C)\ +\sum_{j\in C}\overline{\phi}_j(N,w) \nonumber \\
                         &=& \chi^N_v(C) + \chi^N_w(C). \nonumber
\end{eqnarray}
This concludes the proof of Claim~\ref{clm:proof:firstAxiomatizationOfSynergyValue}.
\end{proof}

Having proved that $\chi$ satisfies properties $P_5$ to $P_9$, it remains to prove that $\chi$ is in fact the only measure in $\Psi$ that satisfies those properties. In other words, assuming that $x\in\Psi$ is a measure satisfying properties $P_5$ to $P_9$, we need to show that $x=\chi$. This will be done using of the following lemma.

\begin{lemma}\label{lem:proof:uniquenessOfSyergyValue}
Given a game $(N,v)$, a coalition $S\subseteq N$, and a constant $\alpha\in{\mathbb R}$, let us denote by $(N,v_{S,\alpha})$ the game where the value of a coalition $C\subseteq N$ is:
\begin{equation}\label{eqn:lem:proof:uniquenessOfSyergyValue:1}
v_{S,\alpha}(C) = \begin{cases}
\alpha &\ \ \text{if } S\subseteq C,\\
0 &\ \ \text{otherwise.}
\end{cases}
\end{equation}
If a measure of synergy $\psi\in\Psi$ satisfies properties $P_5$, $P_6$, $P_7$ and $P_8$, then
$$
\psi^N_{v_{S,\alpha}}(C) = \begin{cases}
(1-2^{1-|S|})\ \alpha &\ \ \ \text{if } S\subseteq C,\\
-|C\cap S|\ (2^{1-|S|}) \frac{\alpha}{|S|} &\ \ \ \text{otherwise.}
\end{cases}
$$
\end{lemma}

\begin{proof}
In the game $(N,v_{S,\alpha})$, Property~$P_7$ states that $(N,v_{S,\alpha}-\psi^N_{v_{S,\alpha}})$ is a game of dummies. In other words, there exist real numbers $(\beta_i)_{i\in N}$ such that:
\begin{equation}\label{eqn:lem:proof:uniquenessOfSyergyValue:2}
\forall C \subseteq N,\ \  v_{S,\alpha}(C) - \psi^N_{v_{S,\alpha}}(C) = \sum_{i\in C} \beta_i.
\end{equation}
Now, observe that every $i\in N\setminus S$ is a null player, meaning that: $v_{S,\alpha}(\{i\})=0$ and that $\psi^N_{v_{S,\alpha}}(\{i\})=0$ (based on Property~$P_6$). Thus, based on Equation~\eqref{eqn:lem:proof:uniquenessOfSyergyValue:2}, we have:
%
%
\begin{equation}\label{eqn:lem:proof:uniquenessOfSyergyValue:3}
\beta_i = 0, \ \ \forall i\in N\setminus S.
\end{equation}
Furthermore, observe that every pair of players $i,j\in S$ are symmetric, meaning that $v_{S,\alpha}(\{i\}) = v_{S,\alpha}(\{j\})$ and that $\psi^N_{v_{S,\alpha}}(\{i\})=\psi^N_{v_{S,\alpha}}(\{j\})$ (based on Property~$P_5$). This, as well as Equation~\eqref{eqn:lem:proof:uniquenessOfSyergyValue:2}, imply that there exists a real number, $\beta\in{\mathbb R}$, such that:
%
%
\begin{equation}\label{eqn:lem:proof:uniquenessOfSyergyValue:4}
\beta_i = \beta, \forall i\in S.
\end{equation}
Property~$P_8$, as well as equations \eqref{eqn:lem:proof:uniquenessOfSyergyValue:1} to \eqref{eqn:lem:proof:uniquenessOfSyergyValue:4}, imply that:
$$
\Bigg( \sum_{s=1}^{|S|-1} \sum_{C\subseteq N:|C\cap S|=s} (-s\beta) \Bigg) + \sum_{C\subseteq N:S\subseteq C} (\alpha - |S|\beta) = 0. 
$$
Thus, we have:
$$
\beta = \frac{\alpha 2^{1-|S|}}{|S|2^{n-1}}.
$$
This, as well as equations \eqref{eqn:lem:proof:uniquenessOfSyergyValue:1} to \eqref{eqn:lem:proof:uniquenessOfSyergyValue:4} imply the correctness of Lemma~\ref{lem:proof:uniquenessOfSyergyValue}.
\end{proof}

Recall that we wanted to use Lemma~\ref{lem:proof:uniquenessOfSyergyValue} to prove that $x=\chi$. To do this, we need to first introduce the notion of a \textit{``carrier game''}~\cite{Shapley:53}. In particular, for every coalition $S\subseteq N$, we will denote by $(N,v_{S})$ the \textit{carrier game over $S$}---the game in which the value of a coalition $C\subseteq N$ is:
$$
v_{S}(C) = \begin{cases}
1 &\ \ \text{if } S\subseteq C,\\
0 &\ \ \text{otherwise.}
\end{cases}
$$
Shapley~\cite{Shapley:53} proved that every game $(N,v)$ is a linear combination of carrier games. In other words, there exist real numbers $(\alpha_S)_{S\subseteq N, S\neq \emptyset}$ such that, for every $C\subseteq N$, we have:
\begin{equation}\label{eqn:carrier:1}
v(C) = \sum_{S\subseteq N, S\neq \emptyset} v_{S,\alpha_S}(C).
\end{equation}
Now since $x$ and $\chi$ both satisfy properties $P_5$, $P_6$, and $P_7$, then Lemma~\ref{lem:proof:uniquenessOfSyergyValue} implies that:
\begin{equation}\label{eqn:carrier:2}
x^N_{v_{S,\alpha_S}} = \chi^N_{v_{S,\alpha_S}}\ ,\ \ \ \ \forall S \subseteq N, S \neq \emptyset.
\end{equation}
Based on equations \eqref{eqn:carrier:1} and \eqref{eqn:carrier:2}, as well as the fact that both $x$ and $\chi$ satisfy Property~$P_9$, we have:
$$
x^N_{v} = \sum_{S\subseteq N, S\neq \emptyset} x^N_{v_{S,\alpha_S}} = \sum_{S\subseteq N, S\neq \emptyset}\chi^N_{v_{S,\alpha_S}} = \chi^N_{v}.
$$		
Since this is true for every game, $(N,v)\in\G$, we conclude that $x=\chi$. This concludes the proof of Theorem~\ref{thm:firstAxiomatizationOfSynergyValue}.
$\ \ \ \square$
\end{proof}


\section{An Alternative Axiomatization}\label{sec:AlternativeAxiomatization}

The characterization that we introduced in
Section~\ref{sec:FirstDefinition} for the Synergy value,
$\chi\in\Psi$, relies on Property~$P_9$, which is admittedly not very
intuitive. Therefore, in this section we identify an alternative
axiomatization of $\chi$ that replaces Property~$P_9$ with a somewhat
more intuitive property. First, let us introduce the following
definition from cooperative game theory:

\begin{definition}[\textbf{Marginal Contribution}]\label{def:marginalContribution}
Let $(N,v)$ be a game in $\G$. The marginal contribution of a player $i\in N$ to a coalition $C\subseteq N$ is denoted by $\MC^C_i(N,v)$, and is given by: $\MC^C_i(N,v) = v(C\cup\{i\})-v(C)$.
\end{definition}

Observe that, $\MC^C_i=0$ for all $i\in N$ and $C\subseteq N:i\in C$. Having defined the marginal contribution, we are now ready to introduce a new property, called $P_{10}$.

\begin{itemize}
    \item[] $P_{10}$. (\textbf{Marginal-Synergy}): A measure $\psi\in\Psi$ satisfies $P_{10}$ if, for every pair of games $(N,v)$ and $(N,w)$ with the same set of players, and for every coalition $C\subseteq N$, if
    \begin{equation}\label{eqn:marginalityInSynergy}
    \forall i\in C, \forall S\subseteq N\setminus \{i\},\ \ \MC^S_i(N,v) = \MC^S_i(N,w),
    \end{equation}
    then the following holds:
    $$
    \psi^N_v(C) = \psi^N_w(C).
    $$
\end{itemize}

To put this in the context of our school scenario, imagine that by the end of Year~1, the teacher has developed a certain opinion on the synergy among a certain group of students, $C\subseteq N$ (where $N$ is the set of students in the classroom). Furthermore, imagine that, in Year~2, the classroom consisted of the same set of students---$N$. Finally, imagine that the performance of every member of $C$ remained unchanged (compared to Year~1). Then, it seems reasonable for the teacher to keep her opinion regarding the synergy among the members of $C$, even if the performance (of some or all) of the students outside $C$ has changed (compared to Year~1). 

We conclude this section with Theorem~\ref{thm:secondAxiomatizationOfSynergyValue}, which essentially states that $P_6$, $P_7$, $P_8$, and $P_{10}$ are axioms that characterize the Synergy value $\chi$. Compared to the previous axiomatization from Section~\ref{sec:FirstDefinition}, here we replace properties $P_9$ and $P_5$ with Property~$P_{10}$.

\begin{theorem}\label{thm:secondAxiomatizationOfSynergyValue}
The Synergy value, $\chi\in\Psi$, is the only measure in $\Psi$ that satisfies properties $P_6$, $P_7$, $P_8$, and $P_{10}$.
\end{theorem}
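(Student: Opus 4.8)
The plan is to mirror the structure of the proof of Theorem~\ref{thm:firstAxiomatizationOfSynergyValue}: first verify that $\chi$ satisfies the four properties $P_6$, $P_7$, $P_8$, $P_{10}$, then prove uniqueness. Since $\chi$ was already shown to satisfy $P_6$, $P_7$, and $P_8$ in Claim~\ref{clm:proof:firstAxiomatizationOfSynergyValue}, the only genuinely new verification is that $\chi$ satisfies the marginality property $P_{10}$. To show this, I would take two games $(N,v)$ and $(N,w)$ and a coalition $C$ satisfying the hypothesis~\eqref{eqn:marginalityInSynergy}, namely that $\MC^S_i(N,v)=\MC^S_i(N,w)$ for every $i\in C$ and every $S\subseteq N\setminus\{i\}$. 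The key observation is that this marginal-contribution condition controls exactly the two ingredients of $\chi^N_v(C) = v(C)-\sum_{i\in C}\overline{\phi}_i(N,v)$: first, telescoping marginal contributions along any chain from $\emptyset$ to $C$ shows $v(C)=w(C)$ (since $v(\emptyset)=w(\emptyset)=0$); second, because the Shapley value of any player $i$ in any subgame $(T,v)$ is a weighted sum of marginal contributions of $i$, and $i\in C$ means the relevant coalitions $T$ with $i\in T$ only invoke marginals $\MC^S_i$ with $S\subseteq N\setminus\{i\}$, we get $\phi_i(T,v)=\phi_i(T,w)$ for each such $T$, hence $\overline{\phi}_i(N,v)=\overline{\phi}_i(N,w)$ for every $i\in C$. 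Together these give $\chi^N_v(C)=\chi^N_w(C)$.

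For uniqueness, the plan is again to reuse the carrier-game decomposition, but the difficulty is that we no longer have $P_9$ (additivity) to patch local agreement on carrier games into global agreement. Instead, I would invoke Lemma~\ref{lem:proof:uniquenessOfSyergyValue}, which only uses $P_5$, $P_6$, $P_7$, $P_8$, to pin down the synergy value on each scaled carrier game $(N,v_{S,\alpha})$. However, $P_{10}$ does not immediately give $P_5$, so before applying the lemma I would first derive $P_5$ (symmetric-synergy) as a consequence of $P_{10}$: if $i,j\in C$ are symmetric in $(N,v)$, I would construct a name-swapping automorphism and argue that swapping $i$ and $j$ leaves the marginal contributions of the members of $C\setminus\{i,j\}$ untouched while exchanging those of $i$ and $j$, so that $P_{10}$ forces equality of synergies under the swap; combined with the fact that symmetric players are interchangeable, this should yield $P_5$ for coalitions of the relevant form. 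This step is the main obstacle, because extracting symmetry purely from a marginality axiom requires carefully setting up the permuted game and checking that the hypothesis of $P_{10}$ is met.

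Once $P_5$ is in hand, $x$ and $\chi$ both satisfy $P_5$, $P_6$, $P_7$, $P_8$, so Lemma~\ref{lem:proof:uniquenessOfSyergyValue} gives $x^N_{v_{S,\alpha}}=\chi^N_{v_{S,\alpha}}$ on every scaled carrier game. The remaining task is to lift this to all games without additivity. Here I would exploit $P_{10}$ directly as the ``gluing'' mechanism: using the Shapley/carrier decomposition~\eqref{eqn:carrier:1}, $v=\sum_{S} v_{S,\alpha_S}$, and the fact that for a fixed coalition $C$ the value $\psi^N_v(C)$ depends (by $P_{10}$) only on the marginal contributions $\MC^S_i$ of members $i\in C$, I would argue that two games agreeing on those marginals must yield the same synergy on $C$. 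Concretely, for a target game $(N,v)$ and coalition $C$, I would isolate the carrier games $v_{S,\alpha_S}$ and show that the relevant marginal contributions of members of $C$ are additive across the decomposition in a way that $P_{10}$ can recognise, so that $x^N_v(C)$ and $\chi^N_v(C)$ are both determined by the already-matched carrier-game values. The delicate point is that $P_{10}$ is a per-coalition statement, so I expect to run this matching separately for each $C$, which is fine since agreement on every $C$ is exactly what $x=\chi$ means.
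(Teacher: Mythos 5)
Your verification half is exactly the paper's: Claim~\ref{clm:syergyValueSatisfiesMarginality} proves that $\chi$ satisfies $P_{10}$ by precisely your two observations (telescoping marginal contributions from $\emptyset$ to get $v(C)=w(C)$, and noting that the sub-game marginals determine $\overline{\phi}_i(N,v)=\overline{\phi}_i(N,w)$ for $i\in C$), with $P_6$, $P_7$, $P_8$ imported from Claim~\ref{clm:proof:firstAxiomatizationOfSynergyValue}. No issue there.

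The uniqueness half, however, has a genuine gap, and it sits exactly where you flagged it. Your plan to recover $P_5$ from $P_{10}$ via a name-swapping automorphism cannot work: if $i$ and $j$ are symmetric in $(N,v)$, then the game obtained by swapping their names is \emph{identical} to $(N,v)$, so the comparison is vacuous. More structurally, $P_{10}$ only ever equates $\psi^N_v(C)$ with $\psi^N_w(C)$ for the \emph{same} coalition $C$ across two games; none of $P_6$, $P_7$, $P_8$, $P_{10}$ provides any handle for comparing $\psi^N_v(C\cup\{i\})$ with $\psi^N_v(C\cup\{j\})$ within a single game, and measures in $\Psi$ carry no built-in anonymity. (The four axioms do imply $P_5$ a posteriori, but only because they imply $\psi=\chi$; deriving $P_5$ first is therefore circular.) Your gluing step suffers from a related problem: in the decomposition $v=\sum_{S}v_{S,\alpha_S}$, the marginal contributions of members of $C$ in $v$ are \emph{sums} of the summands' marginals, not equal to any of them, so $P_{10}$---an invariance axiom, not an additivity axiom---never applies between $v$ and a carrier game. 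To use $P_{10}$ as glue you would need, for each $C$, an auxiliary game with literally the same $C$-marginals as $v$ on which $x$ is already pinned down, and the proposal supplies no such construction.

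The paper's actual proof avoids carrier games and $P_5$ altogether, and its key device is exactly the missing construction. It inducts on $|N^+_v|$, the number of non-null players (Definition~\ref{def:N^+_v}), using the nullification operation $v_b(C)=v(C\setminus\{b\})$ for a chosen $b\in N^+_v$ (Definition~\ref{def:(N,v_i)}); Claim~\ref{clm:proof:secondAxiomatizationOfSynergyValue} shows $N^+_{v_b}\subset N^+_v$, so the inductive hypothesis gives $x^N_{w_b}=\chi^N_{w_b}$. Since $\MC^C_i(N,w_b)=\MC^{C\setminus\{b\}}_i(N,w)$ for every $i\neq b$, the marginals of members of any $C\subseteq N\setminus\{b\}$ agree between $(N,w_b)$ and $(N,w)$, and $P_{10}$ transfers the agreement: $x^N_w(C)=\chi^N_w(C)$ for all $C$ avoiding $b$. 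Coalitions containing $b$ are then handled by $P_7$ and $P_8$: writing $w-x^N_w$ and $w-\chi^N_w$ as additive games $\sum_{i\in C}z_i$ and $\sum_{i\in C}y_i$, agreement on the singletons $\{i\}$ with $i\neq b$ forces $y_i=z_i$ for $i\neq b$, while $P_8$ forces $\sum_{i\in N}y_i=\sum_{i\in N}z_i$, hence $y_b=z_b$ and so $x^N_w=\chi^N_w$ on every coalition. This nullification-plus-dummy-representation argument is what your carrier-based plan would need to be replaced by.
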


\begin{proof}
We begin the proof of Theorem~\ref{thm:secondAxiomatizationOfSynergyValue} by ascertaining that the Synergy value satisfies the four properties listed in the statement of the theorem. As for $P_6$, $P_7$ and $P_8$, we already know from Claim~\ref{clm:proof:firstAxiomatizationOfSynergyValue} that $\chi$ satisfies those properties. Thus, we only need to prove that $\chi$ satisfies Property~$P_{10}$.

\begin{claim}\label{clm:syergyValueSatisfiesMarginality}
The Synergy value, $\chi\in\Psi$, satisfies property $P_{10}$.
\end{claim}

\begin{proof}
Let $(N,v)$ and $(N,w)$ be two games with the same set of players, $N$, and let $C\subseteq N$ be a coalition for which Equation~\eqref{eqn:marginalityInSynergy} holds. Based on this, it is easy to show that the following two equations hold:\footnote{\footnotesize The difference between equations \eqref{eqn:marginalityInSynergy} and \eqref{eqn:syergyValueSatisfiesMarginality:2} is that the former deals with the games $(N,v)$ and $(N,w)$, while the latter deals with all sub-games $(T,v)$ and $(T,w)$ such that $T\subseteq N:i\in T$.}
\begin{equation}\label{eqn:syergyValueSatisfiesMarginality:1}
v(C) = w(C).
\end{equation}
\begin{equation}\label{eqn:syergyValueSatisfiesMarginality:2}
\forall i\in C,\ \forall T\subseteq N:i\in T,\ \forall S\subseteq T\setminus \{i\},\ \ \ \MC^S_i(T,v) = \MC^S_i(T,w).
\end{equation}
Thus, based on the definition of the average Shapley value, $\overline{\phi}_i$, we find that:
\begin{equation}\label{eqn:syergyValueSatisfiesMarginality:3}
\forall i\in C,\ \overline{\phi}_i(N,v) = \overline{\phi}_i(N,w).
\end{equation}
Equations \eqref{eqn:syergyValueSatisfiesMarginality:1} and \eqref{eqn:syergyValueSatisfiesMarginality:3} as well as Definition~\ref{def:synergyValue}---the definition of the Synergy value---immediately imply the correctness of Claim~\ref{clm:syergyValueSatisfiesMarginality}.
\end{proof}

Having proved Claim~\ref{clm:syergyValueSatisfiesMarginality}, all that remains is to prove the uniqueness statement in Theorem~\ref{thm:secondAxiomatizationOfSynergyValue}. For this, we need to first introduce two new definitions, and to prove yet another claim.
\begin{definition}[$N^+_v$]\label{def:N^+_v}
For every game $(N,v)\in\G$, the set $N^+_v \subseteq N$ is the set of players in $(N,v)$ who are \emph{not} null players. More formally,
\begin{equation}\label{eqn:N^+}
N^+_v = \{i\in N: \exists C\subseteq N, \MC^C_i(N,v)\neq 0\}.
\end{equation}
\end{definition}

\begin{definition}[$(N,v_i)$]\label{def:(N,v_i)}
For every set of players, $N\subset\N$, and every $i\in N$, and every characteristic function $v:2^N \to {\mathbb R}$, the game $(N,v_i)$ is defined as follows:
\begin{equation}\label{eqn:v_i}
v_i(C) = v(C\setminus\{i\}),\ \ \forall C\subseteq N.
\end{equation}
\end{definition}
\begin{claim}\label{clm:proof:secondAxiomatizationOfSynergyValue}
For every game $(N,v)\in\G$, and every $i\in N^+_v$, the following holds:
$$
N^+_{v_i} \subset N^+_v.
$$
\end{claim}

\begin{proof}
Let $i$ be a player in $N^+_v$. Based on Definition~\ref{def:(N,v_i)}, we have:
%
%

\begin{eqnarray}\label{eqn:clm:proof:secondAxiomatizationOfSynergyValue:1}
\forall C\subseteq N: i\in C,\ \ \MC^C_i(N,v_i) &=& v_i(C\cup\{i\}) - v_i(C) \eqComment{based on Definition~\ref{def:marginalContribution}} \nonumber\\
                                                &=& v_i(C) - v_i(C) \eqComment{because $i\in C$} \nonumber\\
                                                &=& 0.
\end{eqnarray}
\begin{eqnarray}\label{eqn:clm:proof:secondAxiomatizationOfSynergyValue:2}
\forall C\subseteq N: i\notin C,\ \ \MC^C_i(N,v_i) &=& v_i(C\cup\{i\}) - v_i(C) \eqComment{based on Definition~\ref{def:marginalContribution}} \nonumber\\
                                                      &=& v(C\cup\{i\}\setminus\{i\}) - v(C\setminus\{i\})  \eqComment{based on Definition~\ref{def:(N,v_i)}} \nonumber\\
                                                      &=& v(C)-v(C)  \eqComment{because $i\notin C$} \nonumber\\
                                                      &=& 0. 
\end{eqnarray}
Equations \eqref{eqn:clm:proof:secondAxiomatizationOfSynergyValue:1} and \eqref{eqn:clm:proof:secondAxiomatizationOfSynergyValue:2} imply that $i$ is a null player in the game $(N,v_i)$. Thus, based on Definition~\ref{def:N^+_v}:
\begin{equation}\label{eqn:clm:proof:secondAxiomatizationOfSynergyValue:3}
i\notin N^+_{v_i}.
\end{equation}
On the other hand, the following always holds for every game $(N,v)$ and any two distinct player, $y,z\in N:y\neq z$:
\begin{eqnarray}\label{eqn:clm:proof:secondAxiomatizationOfSynergyValue:4}
\forall C\subseteq N,\ \MC^C_z(N,v_y) &=& v_y(C\cup \{z\}) - v_y(C) \nonumber\\
                                      &=& v((C\cup \{z\})\setminus\{y\}) - v(C\setminus\{y\}) \nonumber\\
                                      &=& v((C\setminus\{y\})\cup \{z\}) - v(C\setminus\{y\}) \nonumber\\
                                      &=& \MC^{C\setminus\{y\}}_z(N,v).
\end{eqnarray}
For every $j\in N\setminus N^+_v$, we know that $j\neq i$ (because $i\in N^+_v$). Then, based on Equation~\eqref{eqn:clm:proof:secondAxiomatizationOfSynergyValue:4}, we have:
\begin{equation}\label{eqn:clm:proof:secondAxiomatizationOfSynergyValue:5}
\forall j\in N\setminus N^+_v,\ \ \MC^C_j(N,v_i) = \MC^{C\setminus\{i\}}_j(N,v).
\end{equation}
Furthermore, for every $j\in N\setminus N^+_v$, we know from Definition~\ref{def:N^+_v} that $j$ is a null player in $(N,v)$. This, as well as Equation~ \eqref{eqn:clm:proof:secondAxiomatizationOfSynergyValue:5}, imply that
$$
\MC^C_j(N,v_i) = 0,\ \ \forall j\in N\setminus N^+_v.
$$
This means that $j$ is also a null player in $(N,v_i)$, not just in $(N,v)$. Therefore, based on Definition~\ref{def:N^+_v}, we have:
\begin{equation}\label{eqn:clm:proof:secondAxiomatizationOfSynergyValue:6}
j\notin N^+_{v_i},\ \ \forall j\in N\setminus N^+_v.
\end{equation}
Equations \eqref{eqn:clm:proof:secondAxiomatizationOfSynergyValue:3} and \eqref{eqn:clm:proof:secondAxiomatizationOfSynergyValue:6} imply the correctness of Claim~\ref{clm:proof:secondAxiomatizationOfSynergyValue}.
\end{proof}

Now we are ready to prove the uniqueness statement in Theorem~\ref{thm:secondAxiomatizationOfSynergyValue}. We will do this by showing that, if $x\in\Psi$ is a measure satisfying properties  $P_6$, $P_7$, $P_8$, and $P_{10}$, then $x=\chi$. The proof will be an inductive one over $|N^+_v|$; we will prove that
\begin{equation}\label{eqn:proof:secondAxiomatizationOfSynergyValue:1}
x=\chi, \forall (N,v)\in\G :|N^+_v|=0,
\end{equation}
\noindent and that
\begin{equation}\label{eqn:proof:secondAxiomatizationOfSynergyValue:2}
x=\chi,\ \forall (N,v)\in\G :|N^+_v|<s\ \ \ \ \ \Rightarrow\ \ \ \ \ x=\chi,\ \forall (N,v)\in\G :|N^+_v|=s.
\end{equation}
\ \newline\textbf{Step 1: Proving that \eqref{eqn:proof:secondAxiomatizationOfSynergyValue:1} holds:}

Definition~\ref{def:N^+_v} implies that, for every set of players $N\subset\N$, there exists exactly one game $(N,v)\in\G$ such that $|N^+_v|=0$; this is the game in which every coalition's value equals zero. In other words, it is the game $(N,v^0)$ where $v^0(C)=0, \forall C\subseteq N$; this is the only possible game in which every $i\in N$ is a null player. Since both $x$ and $\chi$ satisfy Property~$P_6$, and since the empty set is assumed to always have zero synergy, we conclude that:
$$
\forall C\subseteq N,\ x^N_{v^0}(C) = \chi^N_{v^0}(C) = 0.
$$
Therefore, Equation~\eqref{eqn:proof:secondAxiomatizationOfSynergyValue:1} holds, which is what we wanted to prove in Step~1.

\ \newline\textbf{Step 2: Proving that \eqref{eqn:proof:secondAxiomatizationOfSynergyValue:2} holds:}

To this end, assume that $x=\chi, \forall (N,v)\in\G :|N^+_v|<s$ for some $s\in{\mathbb N}$, and let $(N,w)$ be a game in $\G$ such that $|N^+_w|=s$. Since $s>0$, there exists at least one player in $N^+_w$. Let $b$ be an arbitrary player in $N^+_w$. To prove that \eqref{eqn:proof:secondAxiomatizationOfSynergyValue:2} holds, it suffices to show that the following equation holds:
\begin{equation}\label{eqn:proof:secondAxiomatizationOfSynergyValue:4}
x^N_w(C) = \chi^N_w(C), \forall C\subseteq N.
\end{equation}
To this end, Claim~\ref{clm:proof:secondAxiomatizationOfSynergyValue} implies that:
$$
\left|N^+_{w_b}\right| < \left|N^+_w\right|.
$$
Our inductive hypothesis then implies that:
\begin{equation}\label{eqn:proof:secondAxiomatizationOfSynergyValue:5}
x^N_{w_b}(C) = \chi^N_{w_b}(C),\ \ \forall C\subseteq N.
\end{equation}
Importantly, for every $i\in N\setminus\{b\}$, we know from Equation~\eqref{eqn:clm:proof:secondAxiomatizationOfSynergyValue:4} that:
$$
\forall C\subseteq N,\ \MC^C_i(N,w_b) = \MC^{C\setminus\{b\}}_i(N,w).
$$
Therefore,
$$
\forall C\subseteq N\setminus\{b\},\ \MC^C_i(N,w_b) = \MC^C_i(N,w).
$$
Based on this, as well as the fact that both $x$ and $\chi$ satisfy Property~$P_{10}$, we find that:
\begin{equation}\label{eqn:proof:secondAxiomatizationOfSynergyValue:6}
\forall C\subseteq N\setminus\{b\},\ x^N_{w_b}(C) = x^N_w(C),
\end{equation}
\begin{equation}\label{eqn:proof:secondAxiomatizationOfSynergyValue:7}
\forall C\subseteq N\setminus\{b\},\ \chi^N_{w_b}(C) = \chi^N_w(C).
\end{equation}
Equations~\eqref{eqn:proof:secondAxiomatizationOfSynergyValue:5}, \eqref{eqn:proof:secondAxiomatizationOfSynergyValue:6} and \eqref{eqn:proof:secondAxiomatizationOfSynergyValue:7} imply that the following holds
\begin{equation}\label{eqn:proof:secondAxiomatizationOfSynergyValue:3}
x^N_w(C) = \chi^N_w(C), \forall C\subseteq N\setminus\{b\},
\end{equation}
Now since $\chi$ satisfies Property~$P_7$, then $(N,w-\chi^N_w)$ is a game of dummies, meaning that there exist real numbers $(y_i)_{i\in N}$ such that:
\begin{equation}\label{eqn:lem:proof:uniquenessOfSyergyValue:7}
\forall C \subseteq N,\ \  w(C) - \chi^N_w(C) = \sum_{i\in C} y_i.
\end{equation}
Likewise, $x$ satisfies Property~$P_7$, and so there exist real numbers $(z_i)_{i\in N}$ such that:
\begin{equation}\label{eqn:lem:proof:uniquenessOfSyergyValue:8}
\forall C \subseteq N,\ \  w(C) - x^N_w(C) = \sum_{i\in C} z_i.
\end{equation}
Now since both $x$ and $\chi$ satisfy Property~$P_8$, then:
$$
\sum_{C\subseteq N}x^N_v(C) = \sum_{C\subseteq N}\psi^N_v(C)
$$
This, as well as equations \eqref{eqn:lem:proof:uniquenessOfSyergyValue:7} and \eqref{eqn:lem:proof:uniquenessOfSyergyValue:8}, imply that:
\begin{equation}\label{eqn:lem:proof:uniquenessOfSyergyValue:9}
\sum_{i\in N} y_i = \sum_{i\in N} z_i.
\end{equation}
We also know that:
\begin{eqnarray}\label{eqn:lem:proof:uniquenessOfSyergyValue:10}
\forall i\in N\setminus\{b\},\ \ y_i &=& w(\{i\}) - \chi^N_w(\{i\}) \eqComment{following Equation~\eqref{eqn:lem:proof:uniquenessOfSyergyValue:7}} \nonumber\\
                                     &=& w(\{i\}) - x^N_w(\{i\}) \eqComment{following Equation~\eqref{eqn:proof:secondAxiomatizationOfSynergyValue:3}} \nonumber\\
                                     &=& z_i. \eqComment{based on Equation~\eqref{eqn:lem:proof:uniquenessOfSyergyValue:8}}
\end{eqnarray}
Equations \eqref{eqn:lem:proof:uniquenessOfSyergyValue:9} and \eqref{eqn:lem:proof:uniquenessOfSyergyValue:10} imply that:
\begin{equation}\label{eqn:lem:proof:uniquenessOfSyergyValue:11}
y_i=z_i,\ \forall i\in N.
\end{equation}
Equations \eqref{eqn:lem:proof:uniquenessOfSyergyValue:7}, \eqref{eqn:lem:proof:uniquenessOfSyergyValue:8} and \eqref{eqn:lem:proof:uniquenessOfSyergyValue:11} imply the correctness of Equation~\eqref{eqn:proof:secondAxiomatizationOfSynergyValue:4}, which in turn implies the correctness of \eqref{eqn:proof:secondAxiomatizationOfSynergyValue:2} as discussed earlier. This concludes step Step 2 and, consequently, concludes the proof of Theorem~\ref{thm:secondAxiomatizationOfSynergyValue}.
$\ \ \ \square$
\end{proof}


\section{A Different Interpretation}\label{sec:AlternativeInterpretation}

As mentioned in the introduction, we interpret synergy as a \textit{deviation from the norm}. Earlier in Section~\ref{sec:FirstDefinition}, to reflect this interpretation, we added Property~$P_8$ to the axiomatization of the synergy measure.\footnote{\footnotesize See our comment on Property~$P_8$ in Section~\ref{sec:FirstDefinition} for more details.} By following that route, we arrived at the Synergy value, $\chi\in\Psi$. In this section, we will arrive at the same measure, but by following an alternative route. Specifically, this route starts by considering the ``norm'' to be the case where every member makes the same impact that he or she \textit{usually makes in a group}. To better understand the intuition behind this, 
%
%
let us revisit our school scenario. By observing the performance of
different groups, the teacher realizes that each one of the shy
students usually (i.e., on average) hardly makes any impact on his or
her group. This observation alone suggests that putting those students
together in a coalition would lead to a low score, unless there was
some positive synergy that somehow improves the group
performance. This could explain why, when that coalition obtained a
score of 70, the synergy was deemed positive---that score was greater
than what was expected (judging from the average impact of each
member). By a similar reasoning, the synergy in the coalition of
leaders was deemed negative---the teacher expected a higher score from
those students (again judging from the average impact of each of
them).

To try and develop a measure that captures the deviation from such a
``norm'', we must first develop a measure that quantifies the
``\textit{average impact}'' of each player, i.e., the \textit{average
  amount of utility that the player contributes towards the value of a
  coalition}. To put it differently, we must first answer the
following fundamental question:

\vspace*{0.5ex}\hspace*{0.35cm}\parbox{12.5cm}{\textit{Judging from
    the values of the different coalitions in a game, how can we
    adequately quantify the average impact that a particular player
    makes?  }}\vspace*{0.5ex}

\noindent This is particularly relevant when all what we know for a
fact are the values themselves, and not the exact factors that may
have influenced those values. Again consider our school scenario,
where the impact that a student has on a group may depend on various
factors, such as the student's reputation at school, the past
experiences that he or she may have shared with other members,
etc. Some of those factors are very hard (if not impossible) to
observe or quantify. On the other hand, the group scores are
observable and comparable. In such cases, one might desire the ability
to measure the average impact of a player, depending solely on the
values of different coalitions.

To this end, a \textit{measure of average impact}, $\theta$, is a
function that assigns to every player $i\in N$ a real value
representing the average impact that $i$ makes on the value of a
coalition containing $i$. As such, if we denote the set of all such
measures by $\Theta$, then every $\theta\in\Theta$ is a real valued
function on:
$$
\bigcup_{(N,v)\in \G} \big\{(N,v,i) : i\in N\big\}.
$$
For notational convenience, we will write $\theta^N_v(i)$ instead of $\theta((N,v,i))$.

It is important to note that $\theta^N_v(i)$ only considers the coalitions that contain $i$. As for any of the remaining coalitions, $i$ clearly has zero impact.\footnote{\footnotesize This is because we restrict our attention to characteristic function games. However, this assumption might not hold in \textit{partition function games}~\cite{Lucas:Thrall:63}, where the value of a coalition may be influenced by the actions of non-members.} This makes $\theta^N_v(i)/2$ the average impact of $i$ \textit{on all coalitions} in the game.

The following are seemingly reasonable properties to have in a measure of average impact $\theta\in\Theta$:

\begin{itemize}
    \item[] $P_{11}$. (\textbf{Symmetric-Impact}): A measure $\theta\in\Theta$ satisfies $P_{11}$ if, for every game $(N,v)\in\G$, and every pair of symmetric players $i$ and $j$ in that game, we have: $\theta^N_v(i)=\theta^N_v(j)$;\smallskip
    \item[] $P_{12}$. (\textbf{Average-Value}): A measure $\theta\in\Theta$ satisfies $P_{12}$ if, for every game $(N,v)\in\G$, we have: $\frac{1}{2}\sum_{i\in N}\theta^N_v(i) = 2^{-n}\sum_{C\subseteq N}v(C)$;\smallskip
    \item[] $P_{13}$. (\textbf{Marginal-Impact}): A measure $\theta\in\Theta$ satisfies $P_{13}$ if, for every pair of games $(N,v)$ and $(N,w)$ with the same set of players, and for every player $i\in N$, if
    \begin{equation}\label{eqn:marginalityInAverageImpace}
    \forall C\subseteq N\setminus \{i\},\ \ \MC^C_i(N,v) = \MC^C_i(N,w),
    \end{equation}
    then the following holds:
    $$
    \theta^N_v(i) = \theta^N_w(i).
    $$
\end{itemize}

The intuition behind property $P_{11}$ is rather clear. As for $P_{12}$, it also seems intuitive as it states that the average outcome of a game is obtained when every player in that game makes its average impact. Finally, to comment on $P_{13}$, this property implies that if the performance of a certain player remained unchanged, then the average impact of this player should also remain unchanged, regardless of whether the performance of some other player(s) in the game has changed.


Inspired by the above properties, we propose a measure of average impact, called the \textit{Average-Impact value}, and defined as follows:
\begin{definition}[\textbf{Average-Impact (AI) Value}]\label{def:AverageImpactValue}
The \emph{Average-Impact (AI) Value} is the measure of average impact, $\lambda\in\Theta$, defined for every game $(N,v)\in\G$ and every player $i\in N$ as follows:
\begin{equation}
\lambda^N_v(i) = \overline{\phi}_i(N,v).
\end{equation}
\end{definition}
The following theorem implies that properties $P_{11}$, $P_{12}$ and $P_{13}$ axiomatize $\lambda$.

\begin{theorem}\label{thm:uniquenessOfAverageImpact}
The AI value is the only measure in $\Theta$ that satisfies properties $P_{11}$, $P_{12}$, and $P_{13}$.
\end{theorem}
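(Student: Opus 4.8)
The plan is to mirror the two-part structure used for the earlier theorems: first confirm that the AI value $\lambda$ satisfies the three properties, then establish uniqueness by a Young-style induction on the number of carrier games needed to express $v$.

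For the existence direction, all three verifications reduce to facts about the average Shapley value $\overline{\phi}_i$ that are essentially already established. Property $P_{11}$ is exactly Equation~\eqref{proof:P5:2} from the proof of Claim~\ref{clm:proof:firstAxiomatizationOfSynergyValue}: if $i$ and $j$ are symmetric in $(N,v)$ then $\overline{\phi}_i(N,v) = \overline{\phi}_j(N,v)$. For $P_{12}$, I would sum the defining Equation~\eqref{eqn:averageShapleyValue} over $i\in N$, swap the order of summation to group by subgame $S$, and invoke efficiency of the Shapley value within each $(S,v)$ to obtain $\sum_{i\in N}\overline{\phi}_i(N,v) = 2^{1-n}\sum_{S\subseteq N} v(S)$; dividing by $2$ yields $P_{12}$ verbatim. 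For $P_{13}$, the argument duplicates Claim~\ref{clm:syergyValueSatisfiesMarginality}: equality of $i$'s marginal contributions in $(N,v)$ and $(N,w)$ across all $C\subseteq N\setminus\{i\}$ propagates to every subgame containing $i$ (as in Equation~\eqref{eqn:syergyValueSatisfiesMarginality:2}), so $\phi_i(T,v)=\phi_i(T,w)$ for all such $T$, giving $\overline{\phi}_i(N,v)=\overline{\phi}_i(N,w)$.

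For uniqueness, suppose $\theta\in\Theta$ satisfies $P_{11}$, $P_{12}$ and $P_{13}$. Using the carrier-game expansion $v = \sum_{S\subseteq N, S\neq\emptyset}\alpha_S v_S$ I would induct on the index $k(v) = |\{S : \alpha_S\neq 0\}|$. In the base case $k(v)=0$ the game is the zero game, where every player is symmetric, so $P_{11}$ forces $\theta^N_v$ to be constant on $N$ and $P_{12}$ forces that constant to be $0$, matching $\lambda$. For the inductive step, write $\mathcal{T}=\{S:\alpha_S\neq 0\}$ and $T^\ast = \bigcap_{S\in\mathcal{T}}S$. For any player $i\notin T^\ast$ there is some $S\in\mathcal{T}$ with $i\notin S$; dropping every such term defines $w=\sum_{S\in\mathcal{T},\, i\in S}\alpha_S v_S$, which has strictly smaller index. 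Since each dropped unanimity game $v_S$ with $i\notin S$ contributes zero to $i$'s marginal contributions (because $\MC^C_i(v_S)=0$ whenever $i\notin S$), player $i$ has identical marginal contributions in $(N,v)$ and $(N,w)$; hence $P_{13}$ gives $\theta^N_v(i)=\theta^N_w(i)$, and by the induction hypothesis together with the marginality of $\lambda$ just proved, $\theta^N_w(i)=\lambda^N_w(i)=\lambda^N_v(i)$. This settles $\theta^N_v(i)=\lambda^N_v(i)$ for every $i\notin T^\ast$.

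It then remains to treat the players in $T^\ast$. If $T^\ast=\emptyset$ every player has already been handled and we are done. Otherwise every pair $i,j\in T^\ast$ is symmetric in each $v_S$ with $S\in\mathcal{T}$ (both lie in the carrier) and hence symmetric in $v$, so $P_{11}$ makes $\theta^N_v$ constant on $T^\ast$, and likewise $\lambda^N_v$ is constant on $T^\ast$. Applying $P_{12}$ to both $\theta$ and $\lambda$ gives $\sum_{i\in N}\theta^N_v(i)=\sum_{i\in N}\lambda^N_v(i)$; subtracting the contributions of the already-matched players in $N\setminus T^\ast$ leaves the two constants multiplied by $|T^\ast|$ equal, and since $|T^\ast|\geq 1$ the constants coincide. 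Thus $\theta^N_v=\lambda^N_v$ on all of $N$, completing the induction. The step I expect to be the crux is the reduction for $i\notin T^\ast$: one must check simultaneously that $w$ has strictly smaller index (so the induction hypothesis applies) and that $i$'s marginal contributions are genuinely unchanged (so $P_{13}$ applies), the latter resting on the observation that $\MC^C_i(v_S)=0$ whenever $i\notin S$.
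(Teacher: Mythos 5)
Your proposal is correct, and its existence half is essentially identical to the paper's Claim~\ref{clm:AIsatisfiesProperties} (symmetry via Equation~\eqref{proof:P5:2}, $P_{12}$ by swapping sums and invoking efficiency in each subgame, $P_{13}$ because $\overline{\phi}_i$ is built from $i$'s marginal contributions). The uniqueness half follows the same Young-style architecture as the paper---induct on a complexity measure, handle players outside a distinguished intersection via $P_{13}$ plus the inductive hypothesis, then pin down the remaining players via $P_{11}$ and $P_{12}$---but with a genuinely different decomposition. The paper inducts on $|\C(N,v)|$, where $\C(N,v)$ (Definition~\ref{def:C^N_v}) is the set of coalitions containing a subset of non-zero value; it reduces via the game $v_S(C)=v(C)-v(C\cap S)$ of Definition~\ref{def:(N,v_S)}, citing Young~\cite{Young:85} for the key containment $\C(N,v_S)\subset\C(N,v)$, takes $\widehat{C}=\bigcap_{C\in\C(N,w)}C$ as its intersection set, and splits the final step into the subcases $|\widehat{C}|=1$ and $|\widehat{C}|>1$. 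You instead induct on the number $k(v)$ of nonzero coefficients in the carrier-game expansion and reduce by deleting every term $v_{S}$ with $i\notin S$, which is Young's original induction for the Shapley value transplanted to this setting. This buys self-containedness: the two facts you correctly identify as the crux---strict decrease of the index and invariance of $i$'s marginal contributions---both follow in two lines from the observation that $i$ is a null player in a carrier game over $S$ whenever $i\notin S$, whereas the paper's corresponding step rests on a lemma imported from \cite{Young:85} without proof. Your route also treats $|T^\ast|=1$ and $|T^\ast|>1$ uniformly, since constancy on a singleton is vacuous, and your symmetry claim for $i,j\in T^\ast$ checks out (both sides of $v_S(C\cup\{i\})=v_S(C\cup\{j\})$ vanish for $C\subseteq N\setminus\{i,j\}$ because $j\in S$, and symmetry in each summand passes to the sum). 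What the paper's measure $|\C(N,v)|$ buys in exchange is that it is defined directly from the characteristic function, so the uniqueness proof never touches the basis expansion (which the paper reserves for Theorem~\ref{thm:firstAxiomatizationOfSynergyValue}). The only point worth making explicit in your write-up is that the coefficients $\alpha_S$ in the expansion are unique---the carrier games form a basis---so your induction index $k(v)$ is well defined.
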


\begin{proof}
We begin by proving that the AI value satisfies the three properties listed in the statement of the theorem.

\begin{claim}\label{clm:AIsatisfiesProperties}
The AI value satisfies properties $P_{11}$, $P_{12}$ and $P_{13}$.
\end{claim}

\begin{proof}
Let $(N,v)$ be a game in $\G$. We already showed that Equation~\eqref{proof:P5:2} holds for every pair of symmetric players $i,j\in N$, meaning that $\lambda$ satisfies Property~$P_{11}$. Moving on to $P_{12}$, we have:
\begin{eqnarray}
\frac{1}{2}\sum_{i\in N}\overline{\phi}_i(N,v)  &=&  2^{-n} \sum_{i\in N}\sum_{C\subseteq N : i\in C} \phi_i(C,v) \eqComment{following Equation~\eqref{eqn:averageShapleyValue}} \nonumber\\
        &=& 2^{-n} \sum_{C\subseteq N}\sum_{i\in C} \phi_i(C,v)  \nonumber\\
        &=& 2^{-n} \sum_{C\subseteq N}v(C)  \eqComment{as $\phi$ satisfies $P_3$} \nonumber
\end{eqnarray}
Consequently, $\lambda$ satisfies $P_{12}$. Finally, to prove that $\lambda$ satisfies $P_{13}$, it suffices to note that, for every $i\in N$, $\overline{\phi}_i(N,v)$ is by definition a weighted average of the marginal contributions of $i$ to all the coalitions in the game $(N,v)$ (see equations \eqref{eqn:shapleyValue} and \eqref{eqn:averageShapleyValue}).
\end{proof}

Having proved the correctness of Claim~\ref{clm:AIsatisfiesProperties}, it remains to prove the uniqueness statement in Theorem~\ref{thm:uniquenessOfAverageImpact}. For this, we need to introduce two additional definitions:
\begin{definition}[$\C(N,v)$]\label{def:C^N_v}
For every game $(N,v)\in\G$, the set $\C(N,v) \subseteq 2^N$ is the set of coalitions in $(N,v)$ that each have at least one subset whose value is non-zero. Formally,
\begin{equation}\label{eqn:C^N_v}
\C(N,v) = \{C\subseteq N: \exists C'\subseteq C, v(C')\neq 0\}.
\end{equation}
\end{definition}
\begin{definition}[$(N,v_S)$]\label{def:(N,v_S)}
For every set of players $N\subset\N$, and every coalition, $S\subseteq N$, and every characteristic function $v:2^N \to {\mathbb R}$, the game $(N,v_S)$ is defined as follows:
\begin{equation}\label{eqn:v_S}
v_S(C) = v(C) - v(C\cap S),\ \ \forall C\subseteq N.
\end{equation}
\end{definition}
Now, we are ready to prove the uniqueness statement in Theorem~\ref{thm:uniquenessOfAverageImpact}. More specifically, assuming that $\delta\in\Theta$ is an average-impact measure satisfying $P_{11}$, $P_{12}$ and $P_{13}$, we will prove that $\delta = \lambda$. The proof will be an inductive one over $|\C(N,v)|$. More specifically, we will prove that
\begin{equation}\label{eqn:proof:uniquenessOfAverageImpact:1}
\delta=\lambda,\ \ \forall (N,v)\in\G :|\C(N,v)|=0,
\end{equation}
\noindent and that
\begin{equation}\label{eqn:proof:uniquenessOfAverageImpact:2}
\delta=\lambda,\ \forall (N,v)\in\G :|\C(N,v)|<s\ \ \ \ \ \Rightarrow\ \ \ \ \ \delta=\lambda,\ \forall (N,v)\in\G :|\C(N,v)|=s.
\end{equation}

\ \newline\textbf{Step 1: Proving that \eqref{eqn:proof:uniquenessOfAverageImpact:1} holds:}

Definition~\ref{def:C^N_v} implies that, for every set of players $N\subset\N$, there exists exactly one game $(N,v)\in\G$ for which $|\C(N,v)|=0$; this is the game in which every coalition's value is zero, i.e., it is the game $(N,v^0)$, where $v^0(C)=0, \forall C\subseteq N$. In this game, every pair of players are symmetric. Consequently, the following holds, as both $\delta$ and $\lambda$ satisfy Property~$P_{11}$:
\begin{equation}\label{eqn:proof:uniquenessOfAverageImpact:3}
\delta^N_{v^0}(i) = \delta^N_{v^0}(j),\ \ \forall i,j\in N,
\end{equation}
\begin{equation}\label{eqn:proof:uniquenessOfAverageImpact:4}
\lambda^N_{v^0}(i) = \lambda^N_{v^0}(j),\ \ \forall i,j\in N.
\end{equation}
Furthermore, since both $\delta$ and $\lambda$ satisfy Property~$P_{12}$, then:
\begin{equation}\label{eqn:proof:uniquenessOfAverageImpact:5}
\sum_{i\in N}\delta^N_{v^0}(i) = \sum_{i\in N}\lambda^N_{{v^0}}(i).
\end{equation}
Equations \eqref{eqn:proof:uniquenessOfAverageImpact:3}, \eqref{eqn:proof:uniquenessOfAverageImpact:4} and \eqref{eqn:proof:uniquenessOfAverageImpact:5} imply that $\delta^N_{v^0}=\lambda^N_{v^0}$, meaning that Equation \eqref{eqn:proof:uniquenessOfAverageImpact:1} holds, which is what we wanted to prove in Step~1.

\ \newline\textbf{Step 2: Proving that \eqref{eqn:proof:uniquenessOfAverageImpact:2} holds:}

Assuming that the following holds:
\begin{equation}\label{eqn:proof:uniquenessOfAverageImpact:6}
\forall (N,v)\in\G :|\C(N,v)|<s,\ \ \forall i\in N,\ \ \delta^N_v(i) = \lambda^N_v(i),
\end{equation}
we need to prove that the following holds for every $i\in N$, where $w$ is a characteristic function such that $|\C(N,w)|=s$:
\begin{equation}\label{eqn:proof:uniquenessOfAverageImpact:7}
\delta^N_w(i) = \lambda^N_w(i).
\end{equation}
Importantly, for every game $(N,v)\in\G$, and every $S\in\C(N,v)$, we know that $\C(N,v_S) \subset \C(N,v)$ \cite{Young:85}. This implies that $|\C(N,w_S)| < |\C(N,w)|$, meaning that $|\C(N,w_S)| < s$. Thus, based on our assumption that Equation~\eqref{eqn:proof:uniquenessOfAverageImpact:6} holds, we find that:
\begin{equation}\label{eqn:proof:uniquenessOfAverageImpact:8}
\forall i\in N,\ \ \delta^N_{w_S}(i) = \lambda^N_{w_S}(i).
\end{equation}
Now, denote by $\widehat{C}$ the set formed by the intersection of all coalitions in $\C(N,w)$, i.e.,
\begin{equation}\label{eqn:proof:uniquenessOfAverageImpact:9}
\widehat{C} = \bigcap_{C\in \C(N,w)} C.
\end{equation}
We will prove the correctness of Equation~\eqref{eqn:proof:uniquenessOfAverageImpact:7} in two steps, based on $\widehat{C}$. Specifically, in Step~2.1 we will prove that the equation holds for every $i\in N\setminus\widehat{C}$, while in Step~2.2 we will prove that it holds for every $i\in\widehat{C}$.

\ \newline\textbf{Step 2.1.} For every $i\in N\setminus\widehat{C}$, there exists a coalition in $\C(N,w)$ that does not contain $i$. Let $S$ be one such coalition. Then, for every $C\subseteq N$, we have:
\begin{eqnarray}
\MC^C_i(N,w_S) &=& w_S(C\cup\{i\}) - w_S(C) \eqComment{based on Definition~\ref{def:marginalContribution}} \nonumber\\
               &=& w(C\cup\{i\}) - w((C\cup\{i\})\cap S) - w(C) + w(C\cap S) \eqComment{based on Definition~\ref{def:(N,v_S)}} \nonumber\\
               &=& w(C\cup\{i\}) - w(C\cap S) - w(C) + w(C\cap S) \eqComment{because $i\notin S$} \nonumber\\
               &=& w(C\cup\{i\}) - w(C) \nonumber\\
               &=& \MC^C_i(N,w) \eqComment{based on Definition~\ref{def:marginalContribution}} \nonumber
\end{eqnarray}
Consequently, the following two equations hold, as both $\delta$ and $\lambda$ satisfy Property~$P_{13}$:
\begin{equation}\label{eqn:proof:uniquenessOfAverageImpact:10}
\forall i\in N\setminus\widehat{C},\ \ \delta^N_{w_S}(i) = \delta^N_w(i),
\end{equation}
\begin{equation}\label{eqn:proof:uniquenessOfAverageImpact:11}
\forall i\in N\setminus\widehat{C},\ \ \lambda^N_{w_S}(i) = \lambda^N_w(i).
\end{equation}
Equations \eqref{eqn:proof:uniquenessOfAverageImpact:8}, \eqref{eqn:proof:uniquenessOfAverageImpact:10} and \eqref{eqn:proof:uniquenessOfAverageImpact:11} imply that Equation~\eqref{eqn:proof:uniquenessOfAverageImpact:7} holds for every $i\in N\setminus\widehat{C}$, which is what we wanted to prove in Step~2.1.

\ \newline\textbf{Step 2.2.} In this step, we want to prove that Equation~\eqref{eqn:proof:uniquenessOfAverageImpact:7} holds for every $i\in\widehat{C}$. In the case where $\widehat{C}=\emptyset$, the conclusion follows vacuously. Next, we will deal with the case where $|\widehat{C}|=1$, and then deal with the case where $|\widehat{C}|>1$. In both cases, we will make use of the fact that the following holds, as both $\delta$ and $\lambda$ satisfy Property~$P_{12}$: 
\begin{equation}\label{eqn:proof:uniquenessOfAverageImpact:12}
\sum_{i\in N}\delta^N_w(i) = \sum_{i\in N}\lambda^N_w(i).
\end{equation}
First, assume that $|\widehat{C}|=1$, and let $b$ denote the only player in $\widehat{C}$. Since we know that Equation~\eqref{eqn:proof:uniquenessOfAverageImpact:7} holds for every $i\in N\setminus\{b\}$, then Equation~\eqref{eqn:proof:uniquenessOfAverageImpact:12} implies that $\delta^N_w(b) = \lambda^N_w(b)$, i.e., it implies that Equation~\eqref{eqn:proof:uniquenessOfAverageImpact:7} holds for the one player in $\widehat{C}$, which is what we wanted to show.

Now, assume that $|\widehat{C}|>1$. Definition~\ref{def:C^N_v} and Equation~\eqref{eqn:proof:uniquenessOfAverageImpact:9} imply that every coalition $C$ that does not contain $\widehat{C}$ satisfies: $w(C)=0$. This implies that, for every pair of players, $i,j\in \widehat{C}$, and every $C\subseteq N\setminus\{i,j\}$, we have: $w(C\cup\{i\}) = w(C\cup\{j\}) = 0$. Therefore, every pair of players in $\widehat{C}$ are symmetric in the game $(N,w)$. Consequently, the following holds, as both $\delta$ and $\lambda$ satisfy Property~$P_{11}$:
\begin{equation}\label{eqn:proof:uniquenessOfAverageImpact:13}
\delta^N_{w}(i) = \delta^N_{w}(j),\ \ \forall i,j\in \widehat{C},
\end{equation}
\begin{equation}\label{eqn:proof:uniquenessOfAverageImpact:14}
\lambda^N_{w}(i) = \lambda^N_{w}(j),\ \ \forall i,j\in \widehat{C}.
\end{equation}
Recall that we proved in Step~2.1 that Equation~\eqref{eqn:proof:uniquenessOfAverageImpact:7} holds for every $i\in N\setminus\widehat{C}$. Based on this, equations \eqref{eqn:proof:uniquenessOfAverageImpact:12}, \eqref{eqn:proof:uniquenessOfAverageImpact:13} and \eqref{eqn:proof:uniquenessOfAverageImpact:14} imply that Equation~\eqref{eqn:proof:uniquenessOfAverageImpact:7} also holds for every $i\in\widehat{C}$, which is what we wanted to prove. This concludes Step~2.2, and so concludes the proof of Theorem~\ref{thm:uniquenessOfAverageImpact}.
$\ \ \ \square$
\end{proof}

Having proposed $\lambda\in\Theta$ as a measure of the average impact of a player in a game $(N,v)\in\G$, let us now move back to the main focus of this paper---measuring the synergy in a coalition $C\subseteq N$. As mentioned earlier, we interpret synergy as a deviation from the ``norm'', which is considered in this section to be the case where every member makes its average impact. With this interpretation, every measure of average impact, $\theta\in\Theta$, leads to a different measure of synergy, which is: $v(C) - \sum_{i\in C}\theta^N_v(i)$. Now if we adopt $\lambda$ as our measure of choice for the average impact, we arrive at the following measure of synergy: $v(C) - \sum_{i\in C}\lambda^N_v(i)$, which is the same as $\chi$---the measure we arrived at in Section~\ref{sec:FirstDefinition}.


\section{Conclusions}\label{sec:conclusion}

\noindent   In this paper we proposed a new measure of synergy 
in teams and formally analysed its properties.


\bibliographystyle{aaai}
\bibliography{bibliography_new}


\end{document}